\newcommand{\tags}{\usetag{arxiv,full}}
	\newtheorem{theorem}{Theorem}[section]%
	\newtheorem{lemma}[theorem]{Lemma}
	\newtheorem{definition}[theorem]{Definition}
	\crefname{@theorem}{Theorem}{Theorems}
	\crefname{Definition}{Definition}{Definitions}
\newlist{inenum}{enumerate*}{1}
\setlist[inenum]{label=(\roman*)}
\providecommand{\diedge}[1]{(#1)}
\def\old@comma{,}
	\old@comma\discretionary{}{}{}%
\newcommand{\fail}{\textbf{Fail}}
\newcommand{\junk}[1]{}
\title{Sampling Arbitrary Subgraphs Exactly Uniformly in Sublinear Time}
\author{
	Hendrik Fichtenberger
	\thanks{Department of Computer Science, TU Dortmund, Germany. Email: 	\url{hendrik.fichtenberger@tu-dortmund.de}. \href{https://orcid.org/0000-0003-3246-5323}{ORCID~iD: 0000-0003-3246-5323}}
	\and 
	Mingze Gao\thanks{Department of Computer Science, University of Sheffield, UK. Email: \url{noahgao0015@gmail.com}.}
	\and
	Pan Peng
	\thanks{Department of Computer Science, University of Sheffield, UK. Email: \url{p.peng@sheffield.ac.uk}. \href{https://orcid.org/0000-0003-2700-5699}{ORCID~iD: 0000-0003-2700-5699}.}
	}
\date{}
\begin{document}
	\maketitle

\begin{abstract}
	We present a simple sublinear-time algorithm for sampling an arbitrary subgraph $H$ \emph{exactly uniformly} from a graph $G$ with $m$ edges, to which the algorithm has access by performing the following types of queries: (1) degree queries, (2) neighbor queries, (3) pair queries and (4) edge sampling queries. The query complexity and running time of our algorithm are $\tilde{O}(\min\{m, \frac{m^{\rho(H)}}{\# H}\})$ and $\tilde{O}(\frac{m^{\rho(H)}}{\# H})$, respectively, where $\rho(H)$ is the fractional edge-cover of $H$ and $\# H$ is the number of copies of $H$ in $G$. For any clique on $r$ vertices, i.e., $H=K_r$, our algorithm is almost optimal as any algorithm that samples an $H$ from any distribution that has $\Omega(1)$ total probability mass on the set of all copies of $H$ must perform $\Omega(\min\{m, \frac{m^{\rho(H)}}{\# H\cdot (cr)^r}\})$ queries.
	
	Together with the query and time complexities of the $(1\pm \varepsilon)$-approximation algorithm for the number of subgraphs $H$ by Assadi, Kapralov and Khanna~\cite{assadi2018simple} and the lower bound by Eden and Rosenbaum \cite{total-lower-bound} for approximately counting cliques, our results suggest that in our query model, approximately counting cliques is ``equivalent to'' exactly uniformly sampling cliques, in the sense that the query and time complexities of exactly uniform sampling and randomized approximate counting are within polylogarithmic factor of each other. This stands in interesting contrast to an analogous relation between approximate counting and almost uniformly sampling for self-reducible problems in the polynomial-time regime by Jerrum, Valiant and Vazirani~\cite{jerrum1986random}.
\end{abstract}

\section{Introduction}

``\emph{Given a huge real graph, how can we derive a representative sample?}'' is a first question asked by Leskovec and Faloutsos in their seminal work on graph mining~\cite{leskovec2006sampling}, which is motivated by the practical concern that most classical graph algorithms are too expensive for massive graphs (with millions or billions of vertices), and  graph sampling seems essential for lifting the dilemma.

In this paper, we study the question of how to sample a subgraph $H$ uniformly at random from the set of all subgraphs that are isomorphic to $H$ contained in a large graph $G$ in \emph{sublinear time}, where the algorithm is given query access to the graph $G$. That is, the algorithm only probes a small portion of the graph while still returning a sample with provable performance guarantee. 
Such a question is relevant for statistical reasons: we might need a few representative and unbiased motifs from a large network \cite{triangle_counting_app_3}, or edge-colored subgraphs in a structured database~\cite{atserias2008size}, in a limited time. A subroutine for extracting a uniform sample of $H$ is also useful in streaming (e.g., \cite{ahmed2017sampling}), parallel and distributed computing (e.g., \cite{feng2017can}) and other randomized graph algorithms (e.g., \cite{hu2013survey}).

Currently, our understanding of the above question is still rather limited. Kaufman, Krivelevich and Ron gave the first algorithm for sampling an edge almost uniformly at random \cite{edge_sampling_begin}. Eden and Rosenbaum gave a simpler and faster algorithm~\cite{edge_sampling_1}. Both works considered the \emph{general graph model}, where an algorithm is allowed to perform the following queries, where each query will be answered in constant time: 
\begin{description}
	\item[uniform vertex query] the algorithm can sample a vertex uniformly at random;
	\item[degree query] for any vertex $v$, the algorithm can query its degree $d_v$;
	\item[neighbor query] for any vertex $v$ and index $i\leq d_v$, the algorithm can query the $i$-th neighbor of $v$;
	\item[pair query] for any two vertices $u,v$, the algorithm can query if there is an edge between $u,v$. 
\end{description}

In \cite{edge_sampling_1}, Eden and Rosenbaum gave an algorithm that takes as input a graph with $n$ vertices and $m$ edges (where $m$ is unknown to the algorithm), uses $\tilde{O}(n/\sqrt{m})$ queries\footnote{Throughout the paper, we use $\tilde{O}(\cdot)$ to suppress any dependencies on the parameter $\varepsilon$, the size of the corresponding subgraph $H$ and $\log(n)$-terms.} in expectation and returns an edge $e$ that is sampled with probability $(1\pm \varepsilon)/m$ (i.e., almost uniformly at random). This is almost optimal in the sense that any algorithm that samples an edge from an almost-uniform distribution requires $\Omega(n/\sqrt{m})$ queries. In their sublinear-time algorithm for approximately counting the number cliques \cite{eden2018approximating} (see below), Eden, Ron and Seshadhri use a procedure to sample cliques incident to a suitable vertex subset $S$ almost uniformly at random. However, for an arbitrary subgraph $H$, it is still unclear how to obtain an almost uniform sample in sublinear time.  

\subparagraph{Approximate counting in sublinear-time} In contrast to sampling subgraphs (almost) uniformly at random, the very related line of research on approximate counting the number of subgraphs in sublinear time has made some remarkable progress in the past few years. Feige gave a $(2+\varepsilon)$-approximation algorithm with $\tilde{O}(n/\sqrt{m})$ queries for the average degree, which is equivalent to estimating the number of edges, of a graph in the model that only uses vertex sampling and degree queries \cite{edge_counting_app_1}. He also showed that any $(2-o(1))$-approximation for the average degree using only vertex and degree queries requires 
$\Omega(n)$ queries. Goldreich and Ron then gave a $(1+\varepsilon)$-approximation algorithm with $\tilde{O}(n/\sqrt{m})$ queries for the average degree in the model that allows vertex sampling, degree and neighbor queries~\cite{edge_counting_0}. 

Eden et al. recently gave the first sublinear-time algorithm for $(1\pm \varepsilon)$-approximating the number of triangles~\cite{triangle_counting_1}. Later, Eden, Ron and Seshadhri generalized it to $(1\pm \varepsilon)$-approximating the number of $r$-cliques $K_r$ \cite{eden2018approximating} in the general graph model that allows vertex sampling, degree, neighbor and vertex-pair queries. The query complexity and running time of their algorithms for $r$-clique $K_r$ counting are $\tilde{O}(\frac{n}{(\# {K_r})^{1/3}} +\min\{m,\frac{m^{r/2}}{\# {K_r}}\})$ and $\tilde{O}(\frac{n}{(\# {K_r})^{1/3}} +\frac{m^{r/2}}{\# {K_r}})$ respectively, for any $r\geq 3$, where $\# {K_r}$ is the number of copies of $K_r$ in $G$. Furthermore, in both works it was proved that the query complexities of the respective algorithms are optimal up to polylogarithmic dependencies on $n, \epsilon$ and $r$.

Later, Assadi et al.~\cite{assadi2018simple} gave a sublinear-time algorithm for $(1\pm \varepsilon)$-approximating the number of copies of an arbitrary subgraph $H$ in the \emph{augmented general graph model}~\cite{AliSub17}. That is, besides the aforementioned vertex sampling, degree, neighbor and pair queries, the algorithm is allowed to perform the following type of queries:
\begin{description}
	\item[edge sampling query] the algorithm can sample an edge uniformly at random.
\end{description}

The algorithm in \cite{assadi2018simple} uses $\tilde{O}(\min\{m, \frac{m^{\rho(H)}}{\# H}\})$ queries and $\tilde{O}(\frac{m^{\rho(H)}}{\# H})$ time, where $\rho(H)$ is the fractional edge-cover of $H$ and $\# H$ is the number of copies of $H$ in $G$. For the special case $H=K_r$, their algorithm performs $\tilde{O}(\min\{m,\frac{m^{r/2}}{\# {K_r}}\})$ queries and runs in $\tilde{O}(\frac{m^{r/2}}{\# {K_r}})$ time, which do not have the additive term $\frac{n}{(\# K_r)^{1/3}}$ in the query complexity and running time of the algorithms in \cite{triangle_counting_1,eden2018approximating}.
Eden and Rosenbaum provided simple proofs that most of the aforementioned results are nearly optimal in terms of their query complexities by reducing from communication complexity problems~\cite{total-lower-bound}. Further investigation of sampling an edge and estimating subgraphs in low arboricity graphs~\cite{eden_et_al:LIPIcs:2019:10628,ERS20} and approximately counting stars~\cite{AliSub17} has also been performed.

\subparagraph{Relation of approximate counting and almost uniform sampling} 
One of our original motivations is to investigate the relation of approximate counting and almost uniform sampling in the sublinear-time regime. That is, we are interested in the question whether \emph{in the sublinear-time regime, is almost uniform sampling ``computationally comparable'' to approximate counting, or is it strictly harder or easier, in terms of the query and/or time complexities for solving these two problems?} Indeed, in the polynomial-time regime, Jerrum, Valiant and Vazirani showed that for self-reducible problems (e.g., counting the number of perfect matchings of a graph), approximating counting is ``equivalent to'' almost uniform sampling~\cite{jerrum1986random}, in the sense that the time complexities of almost uniform sampling and randomized approximate counting are within polynomial factor of each other. Such a result has been instrumental for the development of the area of approximate counting (e.g., \cite{sinclair1989approximate}). It is natural to ask if similar relations between approximate counting and sampling hold in the sublinear-time regime. In \cite{eden_et_al:LIPIcs:2019:10628}, the authors mentioned that in the general graph model, the query complexities of approximate counting and almost uniformly sampling \emph{edges} are the same (up to $\log n, 1/\varepsilon$ dependencies), while there exist constant-arboricity graphs from which sampling \emph{triangles} almost uniformly requires $\Omega(n^{1/4})$ queries while approximately counting triangles can be done with $\tilde{O}(1)$ queries.

\subsection{Our Results}
In this paper, we consider the problem of (almost) uniformly sampling a subgraph in the augmented general graph model. As mentioned above, this model has been studied in \cite{AliSub17,assadi2018simple}, in which the authors find that ``allowing edge-sample queries results in considerably simpler and more general algorithms for
subgraph counting and is hence worth studying on its own''. On the other hand, allowing edge sampling queries is also natural in models where neighbor queries are allowed, e.g., in the well-studied bounded-degree model and the general model: most graph representations that allow efficient neighbor queries (e.g., GEXF, GML or GraphML) store edges in linear data structures, which often allows efficient (nearly) uniformly sampling of edges. We refer to \cite{assadi2018simple} for a deeper discussion on allowing edge sampling queries from both theoretical and practical perspectives.

We prove the following upper bound on sampling subgraphs (exactly) uniformly at random and provide a corresponding algorithm in \cref{sec:upperbound}.
\begin{theorem}\label{thm:main}
	Let $H$ be an arbitrary subgraph of constant size. There exists an algorithm in the augmented general graph model that given query access to the input graph $G=(V,E)$ and the number of edges $m$ in $G$ and uses $\tilde{O}(\min\{m, \frac{m^{\rho(H)}}{\# H}\})$ queries in expectation, and with probability at least $2/3$, returns a copy of $H$, if $\# H>0$. Each returned $H$ is sampled according to the uniform distribution over all copies of $H$ in $G$. The expected running time of the algorithm is $\tilde{O}(\frac{m^{\rho(H)}}{\# H})$.
\end{theorem}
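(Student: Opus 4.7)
My plan is to build an exactly-uniform sampler by layering rejection sampling on top of a non-uniform candidate generator derived from the Assadi--Kapralov--Khanna (AKK) counting framework. Concretely, I would design a randomized procedure $\mathcal{A}$ that on each trial either returns a ``Fail'' symbol or a copy $C$ of $H$ in $G$, such that for every copy $C$ the probability $q_C$ that $\mathcal{A}$ outputs exactly $C$ (i) admits a closed-form expression computable from the degrees of $V(C)$ and the structure of $H$ alone, and (ii) satisfies $q_C \geq q_{\min}$ for an explicit lower bound of order $1/(m^{\rho(H)}\cdot \mathrm{poly}(|H|))$. Given such an $\mathcal{A}$, I accept its output $C$ with probability $q_{\min}/q_C$ and otherwise restart; the conditional distribution on the accepted copy is then uniform on the copies of $H$, and the success probability of each trial is $q_{\min}\cdot \#H = \tilde{\Theta}(\#H/m^{\rho(H)})$, so $O(m^{\rho(H)}/\#H)$ trials suffice in expectation.

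\textbf{Candidate generator.} For the generator, I would fix an optimal fractional edge cover $\{x_e\}_{e \in E(H)}$ with $\sum_e x_e = \rho(H)$, and, following AKK, decompose the edges of $H$ into those that will be ``sampled'' from $G$ (using the edge-sample oracle) and those whose endpoints have already been selected and will only be ``verified'' using pair queries. To sample an instance one picks, for each of the $\rho(H)$ sampling steps, a random edge $e = (u,v) \in E(G)$, assigns $u$ and $v$ to two vertices of $H$ (using degrees to normalize), and then extends along the remaining edges of $H$ by neighbor queries or pair queries as dictated by the cover. For any specific copy $C$, the resulting probability $q_C$ factors as $\prod_{e \in E(H)}(\text{factor depending on }x_e, \deg(\cdot), m)$; the Atserias--Grohe--Marx bound gives $q_C = \Omega(1/m^{\rho(H)})$ after accounting for the $O(1)$ automorphisms of $H$ and the choice of edge ordering.

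\textbf{Rejection and complexity.} Given this closed form, $q_C$ can be evaluated exactly from $C$ and $G$ using only $O(|E(H)|)$ degree and pair queries, so rejection sampling with acceptance probability $q_{\min}/q_C$ is implementable without knowing $\#H$. Each trial uses $\tilde{O}(1)$ queries (since $|H| = O(1)$), and I would run at most $\tilde{O}(m^{\rho(H)}/\#H)$ trials. To handle the fact that $\#H$ is unknown, I would use the standard geometric doubling on a guess $T$ for $\#H$: run the sampler with budget proportional to $m^{\rho(H)}/T$, and halve $T$ on failure. To obtain the $\min\{m,\cdot\}$ query bound I would run, in parallel, a trivial brute-force procedure that enumerates all $m$ edges and lists every copy of $H$ in $\tilde{O}(m)$ time; this wins whenever $m < m^{\rho(H)}/\#H$.

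\textbf{Main obstacle.} The hardest part will be (a) engineering the candidate generator so that the per-copy probability $q_C$ has a tight, provable lower bound of order $\#H/m^{\rho(H)}$ after all the combinatorial corrections (automorphisms of $H$, multiple orderings of the cover edges, the bipartition into ``sampled'' vs.\ ``verified'' edges), and (b) arguing that $q_C$ depends only on information obtainable in $O(1)$ queries once $C$ is in hand, so the rejection step is truly implementable. Once the generator and its explicit $q_C$ are in place, the uniformity of the output, the expected query count $\tilde{O}(\min\{m, m^{\rho(H)}/\#H\})$, and the expected running time $\tilde{O}(m^{\rho(H)}/\#H)$ follow from a routine Wald-style analysis of the number of trials together with the AKK-style per-trial cost bound.
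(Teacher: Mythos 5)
Your high-level plan — build a candidate generator, compute a per-copy output probability $q_C$, and use instance-dependent rejection $q_{\min}/q_C$ to flatten the distribution, running enough trials and falling back to brute force when $m < m^{\rho(H)}/\#H$ — is a valid rejection-sampling framework, and the complexity bookkeeping in the last paragraph would indeed be routine. However, the entire technical content of the theorem is the part you flag as the ``main obstacle'' and then defer: actually constructing a generator whose per-copy probability admits a usable lower bound $q_{\min}=\Omega(1/m^{\rho(H)})$. That step does not follow from the AKK counting framework and is not routine, and naïve versions fail. For instance, if you decompose $H$ edge-by-edge and extend along an ``unsampled'' edge by drawing a uniform neighbor of an already-placed high-degree vertex, the probability of hitting the right neighbor can be as small as $\Theta(1/n)$, which is far below the $\Theta(1/\sqrt m)$ that a half-unit of fractional cover must contribute; your $q_{\min}$ then collapses and the number of trials blows up past the claimed bound. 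The subgraph-sampler trees in AKK have exactly this kind of instance-dependent skew, which is why they are not (almost) uniform samplers.

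The paper closes that gap with three specific ideas that you would need some analogue of. First, it decomposes $H$ not into arbitrary cover edges but into \emph{vertex-disjoint odd cycles and stars} realizing $\rho(H)$ (Lemma~\ref{decomposition-lemma}), so the generator is a product of independent odd-cycle and star samplers. Second, it kills automorphism overcounting structurally rather than by dividing by $|\mathrm{Aut}|$: a degree-based total order $\prec$ fixes a canonical embedding for each cycle and each star, so every instance corresponds to exactly one accepted configuration of each component. Third — and this is the step a generic construction misses — for an odd cycle it samples the closing ``wedge'' vertex via \textsc{SampleWedge}, which combines the bound $\lvert\{w\in\Gamma_{u}: u\prec w\}\rvert\le\sqrt{2m}$ (Lemma~\ref{lemma: tu-upper-bound}) with a low-degree/high-degree case split to hit the desired vertex with probability \emph{exactly} $1/\sqrt{2m}$, regardless of its actual degree. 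The net effect (Lemma~\ref{subgraph-sampler-lemma}) is that every copy of $H$ is output with probability exactly $1/(2m)^{\rho(H)}$ after a single coin flip with acceptance probability $1/f_{\overline T}(H)$, where $f_{\overline T}(H)$ depends only on $H$; no instance-dependent rejection is needed, and computing $q_C$ from degrees never arises. So your framework is compatible with the paper, but you have abstracted away the construction that makes the theorem true, and the pieces you do specify (extending by neighbor queries, normalizing by degrees, relying on AGM for the lower bound) do not by themselves yield the required $q_{\min}$ — they only bound $\#H$, not the worst-case per-copy probability.
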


We stress that our sampler is an exactly uniform sampler, i.e., the returned $H$ is sampled from the uniform distribution, while to the best of our knowledge, the previous sublinear-time subgraph sampling algorithms are only \emph{almost} uniform samplers. That is, they return an edge or a clique that is sampled from a distribution that is \emph{close} to the corresponding uniform distribution. Indeed, it has been cast as an open question if it is possible to sample an edge exactly uniformly at random in the general graph model in \cite{edge-sampling}. Furthermore, we remark that our algorithm actually does not perform any uniform vertex sampling query, a feature that might be preferable in practice. 

Our algorithm is based on one idea from \cite{assadi2018simple} (see also \cite{atserias2008size}) that uses the fractional edge cover to partition a subgraph $H$ into stars and odd cycles (i.e., Lemma \ref{decomposition-lemma}). The authors of \cite{assadi2018simple} also provided a scheme called \emph{subgraph-sampler trees} for recursively sampling stars and odd cycles that compose $H$, while the resulting distribution is not (almost) uniform distribution. Instead, we show that one can sample stars and odd cycles in parallel (or, more precisely, sequentially but independently of each other) and check whether they form a copy of $H$.

To complement our algorithmic result, we give a lower bound on the query complexity for sampling a clique in sublinear time by using a simple reduction from \cite{total-lower-bound}. We show the following theorem and present its proof in \cref{sec:lowerbound}.  
\begin{theorem}\label{thm:lowerbound}
	Let $r\geq 3$ be an integer. Suppose $\mathcal{A}$ is an algorithm in the augmented general graph model that for any graph $G=(V,E)$ on $n$ vertices and $m$ edges returns an arbitrary $r$-clique $K_r$, if one exists; furthermore, each returned clique $K_r$ is sampled according to a distribution $\mathcal{D}$, such that the total probability mass of $\mathcal{D}$ on the set of all copies of $K_r$ is $\Omega(1)$. Then $\mathcal{A}$ requires $\Omega(\min\{m, \frac{m^{r/2}}{\# K_r\cdot (cr)^r}\})$  queries, for some absolute constant $c>0$.
\end{theorem}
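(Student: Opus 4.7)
The plan is to reduce the \emph{decision} problem ``does $G$ contain at least one $K_r$?'' to the sampling problem, and then invoke the communication-complexity-based lower bound of Eden and Rosenbaum~\cite{total-lower-bound} for approximately counting cliques. The latter hardness is in fact a hardness-of-distinguishing result, so it plugs directly into a reduction from sampling.

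First I turn $\mathcal{A}$ into a one-sided decision procedure $\mathcal{B}$: run $\mathcal{A}$ a constant number of times, and for each output $r$-tuple of vertices spend $\binom{r}{2}$ pair queries to verify whether it is a genuine $K_r$ in $G$; output YES iff some verification succeeds, else NO. When $\# K_r=0$ no verification can ever succeed, so $\mathcal{B}$ always outputs NO; when $\# K_r\geq 1$, the hypothesis that $\mathcal{D}$ places $\Omega(1)$ probability mass on true copies of $K_r$ means a single run of $\mathcal{A}$ returns a verifiable copy with constant probability, and a constant number of independent repetitions boosts this to success probability at least $2/3$. Hence $\mathcal{B}$ distinguishes $\# K_r=0$ from $\# K_r\geq 1$ using $O(q_{\mathcal{A}}+r^2)$ queries in expectation, where $q_{\mathcal{A}}$ denotes the expected query complexity of $\mathcal{A}$.

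Next I apply the hard instances from \cite{total-lower-bound}: their two-party communication reduction produces, for a suitable range of $(m,T)$, two distributions on $m$-edge graphs---one supported on graphs with no copies of $K_r$ and one supported on graphs with $\Theta(T)$ copies of $K_r$---that cannot be distinguished in the general graph model using $o\bigl(\min\{m,\, m^{r/2}/(T\cdot (cr)^r)\}\bigr)$ queries. Setting $T=\# K_r$ and composing with the reduction above gives $q_{\mathcal{A}}=\Omega\bigl(\min\{m,\, m^{r/2}/(\# K_r\cdot (cr)^r)\}\bigr)$, which is exactly the desired bound.

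The only subtlety I expect is that the cited lower bound must remain valid in the \emph{augmented} general graph model, where edge sampling queries are also available. This should follow by observing that an edge sampling query delivers only $O(\log m)$ bits of information and can therefore be simulated by the two parties in the underlying communication protocol at the cost of $O(\log m)$ communication per query; the resulting overhead is absorbed by the constant $c$ inside $(cr)^r$ (or introduces at most a harmless polylogarithmic factor if one prefers to track it separately). No other step of the reduction is sensitive to which of the four query types is used.
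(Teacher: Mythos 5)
Your proposal is correct and takes essentially the same approach as the paper: both reduce the $K_r$-free vs.\ many-$K_r$ distinguishing problem from~\cite{total-lower-bound} to the sampling algorithm $\mathcal{A}$, turning the sampler into a distinguisher and invoking the known hardness. The paper simply notes that Theorems~4.7 and~B.1 of~\cite{total-lower-bound} already apply in the augmented general graph model (so no separate simulation of edge-sampling queries is needed), and it uses a single truncated run of $\mathcal{A}$ rather than repetition plus explicit pair-query verification, but these are only presentational differences.
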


Note that the above theorem gives a lower bound for sampling $K_r$ from almost every non-trivial distribution $\mathcal{D}$. In particular, it holds if $\# K_r>0$ and $\mathcal{D}$ is a distribution that is only supported on the set of all copies of $K_r$, e.g., the (almost) uniform distribution on these copies. Together with the query and time complexities of the $(1\pm \varepsilon)$-approximation algorithm for the number of subgraphs $H$ by Assadi, Kapralov and Khanna~\cite{assadi2018simple} and the lower bound by Eden and Rosenbaum  \cite{total-lower-bound} for approximately counting cliques, our \cref{thm:main,thm:lowerbound} imply that in the augmented general graph model, \emph{approximately} counting the number of cliques is equivalent to \emph{exactly} sampling cliques in the sense that the query and time complexities of them are within a polylogarithmic factor of each other.

\subparagraph{Future Work} Considering real-world applications, it would be interesting to relax the guarantees of the queries available to the algorithm. In particular, one may not be able to sample edges \emph{exactly} uniformly at random, but only \emph{approximately} uniformly. For example, there exist works that consider weaker query models in which they sample vertices or edges almost uniformly at random by performing random walks from some fixed vertex (see, e.g., \cite{Ben-Hamou2018,Chiericetti2016}). Implementing these changes in the model would result in a weaker guarantee for the distribution of sampled subgraphs in \cref{thm:main} but would be potentially more practical.

\section{Preliminaries}
Let $G=(V,E)$ be a simple graph with $|V|=n$ vertices and $|E|=m$ edges. For a vertex $v \in V$, we denote by $d_v$ the degree of the vertex, by $\Gamma_v$ the set of all the neighbors of $v$, and by $E_v$ the set of edges incident to $v$. We fix a total order on vertices denoted by $\prec$ as follows:
\begin{definition}
	For any two vertices $u$ and $v$, we say that $u \prec v$ if $d_u < d_v$ or $d_u = d_v$ and $u$ appears before $v$ in the lexicographic order.  
\end{definition}

For any two vertices, we denote by $\Gamma_{uv}$ the set of the shared neighbors of $u$ and $v$ that are larger than $u$ with respect to ``$\prec$'', i.e., $\Gamma_{uv} = \{ w \mid w \in \Gamma_u \cap \Gamma_v \wedge u \prec w \}$. Sometimes, we view our graph $G=(V,E)$ as a directed graph $(V,\vec{E})$ by treating each undirected edge $e=\{u,v\}\in E$ as two directed edges $\vec{e}_1=(u,v)$ and $\vec{e}_2=(v,u)$. The following was proven in \cite{triangle_counting_1}.

\begin{lemma}[\cite{triangle_counting_1}]
	\label{lemma: tu-upper-bound}
	For any vertex $v$, the number of neighbors $w$ of $v$ such that $v\prec w$ is at most $\sqrt{2 m}$.
\end{lemma}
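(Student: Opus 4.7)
The plan is to exploit the fact that the order $\prec$ is essentially a degree ordering: whenever $v \prec w$, we have $d_v \leq d_w$ (with equality resolved by lexicographic order). Thus each of the claimed neighbors $w$ has degree at least $d_v$, and a double-counting argument against the total degree $2m$ will yield the bound.

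Concretely, first I would let $k$ denote the number of neighbors $w$ of $v$ with $v \prec w$, and let $S \subseteq \Gamma_v$ be the set of such neighbors. Since $S$ consists of neighbors of $v$, we automatically have $d_v \geq |S| = k$. Next, by the definition of $\prec$, every $w \in S$ satisfies $d_w \geq d_v \geq k$. Summing the degrees of the vertices in $S$ then gives
\[
\sum_{w \in S} d_w \;\geq\; k \cdot k \;=\; k^2.
\]
On the other hand, since $S \subseteq V$ and $\sum_{u \in V} d_u = 2m$, we also have $\sum_{w \in S} d_w \leq 2m$. Combining the two inequalities yields $k^2 \leq 2m$, i.e., $k \leq \sqrt{2m}$, as required.

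There is essentially no obstacle here beyond noticing the one-line implication $v \prec w \Rightarrow d_v \leq d_w$ and then applying the standard ``heavy vs.\ light'' degree-counting trick; the only subtle point to state cleanly is that the lexicographic tie-breaking still preserves the inequality $d_v \leq d_w$, so no case distinction beyond ``$d_v < d_w$ or $d_v = d_w$'' is needed. Since the proof is short and entirely elementary, I would keep it to a single paragraph in the final write-up.
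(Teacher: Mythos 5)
Your proof is correct and uses the standard double-counting argument that underlies this lemma in the cited source~\cite{triangle_counting_1}; the paper itself states the lemma by citation and does not reprove it. The key chain of observations is exactly right: $v \prec w$ always forces $d_v \leq d_w$ (the lexicographic tie-break only intervenes when degrees are already equal), $|S| \leq d_v$ since $S \subseteq \Gamma_v$, and combining these gives $\sum_{w \in S} d_w \geq |S| \cdot d_v \geq |S|^2$, which is bounded above by $\sum_{u \in V} d_u = 2m$, yielding $|S| \leq \sqrt{2m}$. No gaps.
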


Given a graph $H$, we say that a subgraph $H'$ of $G$ is a \emph{copy} or an \emph{instance} of $H$ if $H'$ is isomorphic to $H$. An isomorphism-preserving mapping from $H$ to a copy of $H$ in $G$ is called an \emph{embedding} of $H$ in $G$.

\subparagraph{Edge Cover and Graph Decomposition}
We use the following definition of the fractional edge cover of a graph and a decomposition result based on it by Assadi, Kapralov and Khanna~\cite{assadi2018simple}.

\begin{definition}[Fractional Edge-Cover Number]
	A fractional edge-cover of $H(V_H,E_H)$ is a mapping $\psi: E_H \rightarrow [0,1]$ such that for each vertex $v\in V_H$, $\sum_{e\in E_H, v\in e} \psi(e)\geq 1$. The fractional edge-cover number $\rho(H)$ of $H$ is the minimum value of $\sum_{e\in E_H}\psi(e)$ among all fractional edge-covers $\psi$.
\end{definition}

Let $C_k$ denote the cycle of length $k$. Let $S_k$ denote a star with $k$ petals, i.e., $S_k = (\{u, v_1, \ldots, v_k\}, \cup_{i \in [k]} \{u, v_k\})$. Let $K_k$ denote a clique on $k$ vertices. It is known that $\rho(C_{2k+1})=k+1/2$, $\rho(S_k)=k$ and $\rho(K_k)=k/2$. 

\begin{lemma}[\cite{assadi2018simple}]
	\label{decomposition-lemma}
	Any subgraph $H$ can be decomposed into a collection of vertex-disjoint odd cycles $\overline{C_1},\ldots,\overline{C_o}$ and star graphs $\overline{S_1},\ldots,\overline{S_s}$ such that 
	$$\rho(H)=\sum_{i=1}^o\rho(\overline{C_i})+\sum_{j=1}^s\rho(\overline{S_j}).$$
\end{lemma}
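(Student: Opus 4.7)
The plan is to extract the decomposition from a pair of optimal half-integral primal and dual solutions to the fractional edge-cover LP. Let $\psi^{\star}:E_H\to\{0,\tfrac{1}{2},1\}$ minimize $\sum_{e}\psi(e)$ subject to $\sum_{e\ni v}\psi(e)\ge 1$, and let $y^{\star}:V_H\to\{0,\tfrac{1}{2},1\}$ maximize $\sum_v y_v$ subject to $y_u+y_v\le 1$ for every edge. Both half-integral optima exist by the classical half-integrality of the fractional edge-cover and vertex-packing polytopes, and LP duality gives $\sum_e\psi^{\star}(e)=\sum_v y^{\star}_v=\rho(H)$.

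Complementary slackness pins down the structure of $\psi^{\star}$. For every edge $e=\{u,v\}$ with $\psi^{\star}(e)>0$ one has $y^{\star}_u+y^{\star}_v=1$, so its endpoints form either a tight-slack pair $(1,0)$ (forcing $\psi^{\star}(e)=1$) or a fractional-fractional pair $(\tfrac{1}{2},\tfrac{1}{2})$ (forcing $\psi^{\star}(e)=\tfrac{1}{2}$). Moreover, any vertex $v$ with $y^{\star}_v>0$ satisfies $\sum_{e\ni v}\psi^{\star}(e)=1$. Consequently the supports $E_1=\{e:\psi^{\star}(e)=1\}$ and $E_{1/2}=\{e:\psi^{\star}(e)=\tfrac{1}{2}\}$ share no vertex: a fractional vertex sits only on $\tfrac{1}{2}$-edges, while a tight vertex sits on exactly one weight-$1$ edge (incident to a slack ``center'').

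It follows that $E_{1/2}$ is $2$-regular on its vertex set and hence a disjoint union of cycles; any even cycle can be replaced, at equal cost, by a perfect matching of weight-$1$ edges, so we may assume all surviving $\tfrac{1}{2}$-cycles $\overline{C_1},\dots,\overline{C_o}$ are odd. A path of three consecutive weight-$1$ edges in $E_1$ would doubly cover both interior vertices, allowing the middle edge to be removed while preserving feasibility and strictly decreasing cost, contradicting optimality; hence $E_1$ decomposes into vertex-disjoint stars $\overline{S_1},\dots,\overline{S_s}$ (with single edges counted as $S_1$). Using $\rho(\overline{C_i})=|V(\overline{C_i})|/2$ and $\rho(\overline{S_j})=|E(\overline{S_j})|$, realized by the uniform $\tfrac{1}{2}$- and $1$-weights on the respective components, summation yields $\rho(H)=\sum_e\psi^{\star}(e)=\sum_i\rho(\overline{C_i})+\sum_j\rho(\overline{S_j})$.

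The hard part will be establishing the simultaneous half-integrality of $\psi^{\star}$ and $y^{\star}$ and applying complementary slackness cleanly to force the tight/fractional/slack trichotomy on vertices; after that, the cycles-and-stars structure is a direct consequence of degree counting together with the single cost-preserving swap that eliminates even $\tfrac{1}{2}$-cycles.
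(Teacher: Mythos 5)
The paper does not prove this lemma; it cites it directly from Assadi, Kapralov, and Khanna~\cite{assadi2018simple}, so there is no in-paper argument to compare against. Judged on its own, your LP-duality/half-integrality strategy is in the right spirit, but the central complementary-slackness step is incorrect as written. Complementary slackness only gives $y^{\star}_u + y^{\star}_v = 1$ whenever $\psi^{\star}(e) > 0$; it does \emph{not} let you read off the value $\psi^{\star}(e)$ from the pair $(y^{\star}_u, y^{\star}_v)$. Your assertion that a $(1,0)$ pair ``forces $\psi^{\star}(e)=1$'' and a $(\tfrac12,\tfrac12)$ pair ``forces $\psi^{\star}(e)=\tfrac12$'' is false. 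For a concrete counterexample take $H=C_4$: $\psi^{\star}\equiv\tfrac12$ and $y^{\star}=(1,0,1,0)$ are both optimal and half-integral, satisfy complementary slackness, and yet every edge has $\psi^{\star}=\tfrac12$ while sitting on a $(1,0)$ pair. For $K_4$, take $\psi^{\star}$ to be a perfect matching with weight $1$ and $y^{\star}\equiv\tfrac12$; here an edge with $\psi^{\star}=1$ sits on a $(\tfrac12,\tfrac12)$ pair. So the derived claims --- that $E_1$ and $E_{1/2}$ are vertex-disjoint and that $E_{1/2}$ is $2$-regular --- do not follow for an arbitrary optimal pair $(\psi^{\star},y^{\star})$, and the downstream steps (even cycle $\to$ matching, no three-edge path in $E_1$), while individually sound as local exchanges, rest on this unestablished structure.

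You flagged ``the trichotomy'' as the hard part, but the real obstruction is not the simultaneous half-integrality (that is classical); it is that complementary slackness alone cannot deliver the disjointness of $E_1$ and $E_{1/2}$. A workable route is to fix a particular optimal half-integral primal solution extremal for a secondary criterion (e.g.\ minimizing $\lvert E_{1/2}\rvert$, or the support size), and then derive the cycles-and-stars structure by alternating-path/local-exchange arguments on the primal alone: show every vertex meeting $E_{1/2}$ has $E_{1/2}$-degree exactly $2$ (degree $1$ would allow shifting weight along an alternating walk without increasing cost, eventually contradicting extremality), eliminate even cycles by the matching swap, and then rule out $P_4$'s and triangles in $E_1$ exactly as you do. As written, the proposal has a genuine gap at the step that is supposed to produce the disjoint decomposition.
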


By a result of Atserias, Grohe and Marx \cite{atserias2008size}, the number of instances of $H$ in a graph $G$ with $m$ edges is $O(m^{\rho(H)})$.

\section{Sampling an Arbitrary Subgraph $H$}
\label{sec:upperbound}

In this section, we present sampling algorithms for odd cycles and stars and show how to combine them to obtain a sampling algorithm for arbitrary subgraphs.

\subsection{Sampling an Odd-Length Cycle}

\begin{algorithm}
	\caption{Sampling a wedge}
	\label{wedge-sample}
	\begin{algorithmic}[1]
		\Procedure{\textsc{SampleWedge}}{$G,u,v$}
		\If{$d_u \leq \sqrt{2 m}$} \label{wedge-degree}
		\State sample a number $i \in \{ 1, \ldots \sqrt{2 m} \}$ uniformly at random
		\If{$i > d_u$}
		\State \Return \fail
		\EndIf
		\State $w$ $\leftarrow$ $i^{th}$ neighbor of $u$
		\Else
		\State  %
		sample a vertex $w$ with probability proportional to its degree
		\State sample a number $t \in [0,1]$ uniformly at random
		\If{$t > \sqrt{2m} / d_w$}
		\State \Return \fail
		\EndIf
		\EndIf
		\State \Return $w$
		\EndProcedure
	\end{algorithmic}
\end{algorithm}

\begin{algorithm}
	\caption{Sampling a cycle of length $2k+1$}
	\label{odd-cycle-sample}
	\begin{algorithmic}[1]
		\Procedure{\textsc{SampleOddCycle}}{$G,2k+1$}  
		\State sample $k$ directed edges $\diedge{u_1,v_1}, \ldots, \diedge{u_k,v_k}$ u.a.r. and i.i.d. \label{odd-cycle-loop}
		\State $w$ $\leftarrow$ \textsc{SampleWedge}($G,u_1,v_{k}$)
		\If{$u_1 \prec w \prec v_1$, and $\forall i > 1 : u_1 \prec u_i, v_i$} \label{path-check}
		\State \Return $\{(u_1,v_1),\ldots,(u_k,v_k)\}\cup \{ (v_{k},w), (w,u_1) \}$
		\EndIf
		\State \Return\fail
		\EndProcedure
	\end{algorithmic}
\end{algorithm}

We describe our algorithm \textsc{SampleOddCycle} for sampling a uniformly random odd-length $k$-cycle. For any instance $C$ of $C_{2k+1}$ in the input graph, our goal is to guarantee that it will be sampled with probability $\frac{1}{m^{k+1/2}}$. Let $e_1, \ldots, e_{2k+1}$ be a sequence of edges that represents a cycle of length $2k+1$. While we can use edge sampling to sample every second edge of the first $2k$ edges sequentially, i.e., $e_1, e_3, \ldots, e_{2k-1}$, and query the edges inbetween, i.e., $e_2, \ldots, e_{2k-2}$, by vertex pair queries, we use a different strategy to sample $e_{2k}$ and $e_{2k+1}$. Let $\{u,v\} = e_1$. If $u$ has low degree, i.e., $d_u \leq \sqrt{2m}$, we can afford to sample each neighbor of $u$ with probability $1 / \sqrt{2m}$ and fail if no neighbor is sampled. If $d_u > \sqrt{2m}$, we reduce the number of candidate neighbors to $\sqrt{2m}$ and sample from the remaining candidates uniformly at random. To this end, we define a unique embedding of $C$ such that $u$ is the smallest vertex according to the order ``$\prec$''. By \cref{lemma: tu-upper-bound}, it follows that the number of candidate neighbors is at most $\sqrt{2m}$. Another reason to accept exactly one embedding of $C$ is that there exists a linear number of automorphisms for every cycle. If we would accept every embedding, bounding the probability that every instance of $C_{2k+1}$ is sampled \emph{exactly} uniformly is hard as some instance might be sampled less likely because, e.g., its edges participate in many overlapping cycles.

In particular, we sample $k$ directed edges $(u_1, v_1), \ldots, (u_k, v_k)$ independently and uniformly at random and call \textsc{SampleWedge} on $u_1, v_k$. Then, we require that $u_1$ is the (unique) smallest vertex according to the order ``$\prec$''  among all $u_i, v_i, i \geq 1$ and $w$. This leaves only two orientations of the cycle that are distinguished by $w \prec v_1$ and $v_1 \prec w$. We (arbitrarily) choose $w \prec v_1$. If any of these requirements is not met, we have not sampled the uniquely defined embedding we are looking for, and the algorithm fails.

\begin{lemma}
	\label{odd-cycle-sampler-lemma}
	For any instance of an odd cycle $C_{2k+1}$ in $G$, the probability that it will be returned by \textsc{SampleOddCycle}($G,2k+1$) is $\frac{1}{(2m)^{k+1/2}}$.
\end{lemma}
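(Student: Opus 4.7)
The plan is to fix an arbitrary instance $C$ of $C_{2k+1}$ in $G$ and show that $\textsc{SampleOddCycle}(G,2k+1)$ returns exactly $C$ with probability $1/(2m)^{k+1/2}$. The core observation is that the two order conditions at \cref{path-check} single out a unique labeling of $C$, so the event of returning $C$ coincides with the event of sampling the particular $k$ directed edges and the particular wedge vertex determined by that labeling.

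First I would establish uniqueness of the canonical labeling. The condition $\forall i>1 : u_1 \prec u_i, v_i$ forces $u_1$ to be the $\prec$-minimum vertex of $C$, which is unique since $\prec$ is a total order. Once $u_1$ is fixed, its two neighbors in $C$ are determined, and both are larger than $u_1$ in $\prec$; the extra condition $w \prec v_1$ then labels the larger neighbor as $v_1$ and the smaller as $w$. This pins down an orientation of $C$, so tracing around the cycle from $u_1$ toward $v_1$ yields a unique tuple $(u_1,v_1,u_2,v_2,\dots,u_k,v_k,w)$ satisfying all the checks.

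Next I would compute the probability of producing exactly this tuple. The $k$ directed edges $(u_1,v_1),\dots,(u_k,v_k)$ are sampled independently and uniformly from the $2m$ directed edges of $G$, so they coincide with the prescribed tuple with probability $(1/(2m))^{k}$. Conditioned on this, I would argue that $\textsc{SampleWedge}(G,u_1,v_k)$ outputs the specific vertex $w$ with probability exactly $1/\sqrt{2m}$ in both branches of the procedure. In the low-degree branch ($d_{u_1}\le\sqrt{2m}$), $w$ is returned precisely when the uniform index $i\in\{1,\dots,\sqrt{2m}\}$ hits the unique slot of $w$ in $u_1$'s neighbor list, an event of probability $1/\sqrt{2m}$. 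In the high-degree branch ($d_{u_1}>\sqrt{2m}$), the condition $u_1\prec w$ combined with the definition of $\prec$ gives $d_w\ge d_{u_1}>\sqrt{2m}$, so sampling $w$ with probability $d_w/(2m)$ and accepting with probability $\sqrt{2m}/d_w\in(0,1]$ combines to $d_w/(2m)\cdot\sqrt{2m}/d_w = 1/\sqrt{2m}$.

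Combining the two independent contributions yields $(1/(2m))^{k}\cdot 1/\sqrt{2m} = 1/(2m)^{k+1/2}$, as claimed. The step that needs most care is the high-degree branch: one has to verify that $\sqrt{2m}/d_w$ is a legitimate probability, and this is where the canonical choice $u_1\prec w$ is essential, since it guarantees $d_w\ge d_{u_1}>\sqrt{2m}$ and hence $\sqrt{2m}/d_w\le 1$. Without this observation, small-degree candidates for $w$ would receive weight strictly less than $1/\sqrt{2m}$ and the uniform-per-candidate computation would break down.
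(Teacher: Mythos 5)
Your proof is correct and follows essentially the same route as the paper: fix the $\prec$-minimum vertex $u_1$ as the anchor of a unique canonical labeling of the given cycle, observe that the algorithm returns the cycle exactly when all $k$ directed edges and the wedge vertex match that labeling, and multiply the independent probabilities $(1/(2m))^k$ and $1/\sqrt{2m}$. You also identify the same key subtlety the paper relies on for the high-degree branch of \textsc{SampleWedge}, namely that $u_1 \prec w$ implies $d_w \ge d_{u_1} > \sqrt{2m}$ so the acceptance threshold $\sqrt{2m}/d_w$ is a valid probability and the per-candidate weight is exactly $1/\sqrt{2m}$.
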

\begin{proof}
	Let $\mathcal{C}_{2k+1}$ be any instance of a cycle of odd length $2k+1$ in $G$. Let $x_0$ be the smallest vertex on $\mathcal{C}_{2k+1}$ according to the total order ``$\prec$''. Let $x_1, x_{2k}$ be the two neighbors of $x_0$ on $\mathcal{C}_{2k+1}$ such that $x_1\prec x_{2k}$. Then, we let $x_i$ denote the vertices on $\mathcal{C}_{2k+1}$ such that $(x_i, x_{i+1}) \in E(\mathcal{C}_{2k+1})$ for $0\leq i\leq 2k-1$ and $(x_{2k},x_0)\in E(\mathcal{C}_{2k+1})$. Note that for any $\mathcal{C}_{2k+1}$, there is a \emph{unique} way of mapping its vertices to $x_i$, for $0\leq i\leq 2k$. 
	Thus, $\textsc{SampleOddCycle}$ returns $\mathcal{C}_{2k+1}$ if and only if 
	\begin{enumerate}
		\item $u_1 = x_0$ and $v_1 = x_{2k}$; \label{odd-mapping-a}
		\item $u_i=x_{2k-2i+3}$ and $v_i=x_{2k-2i+2}$ for $2\leq i\leq k$; \label{odd-mapping-b}
		\item \textsc{SampleWedge}($G,u_{1},v_k$) returns $x_1$. \label{odd-mapping-c}
	\end{enumerate}
	
	Event \ref{odd-mapping-a} occurs with probability $1/(2m)$, and event \ref{odd-mapping-b} occurs with probability $1/(2m)^{k-1}$, as each directed edge is sampled with probability $1/(2m)$. 
	
	Now we bound the probability of event \ref{odd-mapping-c}. In the call to \textsc{SampleWedge}, let $u := u_1$ and $v := v_k$, which satisfies that $u \prec v$. We first note that if $d_u < \sqrt{2m}$ in \textsc{Sample\-Wedge}($G,u_{1},v_k$), then the vertex $x_1$ will be sampled with probability $1 / \sqrt{2m}$. Now we consider the case that $d_u \geq \sqrt{2m}$. It follows that $d_{x_1} \geq d_u > \sqrt{2m}$. In this case, we will sample $x_1$ with probability $\frac{d_{x_1}}{2m} \cdot t = \frac{d_{x_1}}{2m} \cdot \frac{\sqrt{2m}}{d_{x_1}} = \frac{1}{\sqrt{2m}}$. Thus in both cases, the probability that event \ref{odd-mapping-c} occurs is $\frac{1}{\sqrt{2m}}$.
	Therefore, the probability that \textsc{SampleOddCycle} returns $\mathcal{C}_{2k+1}$ is $\frac{1}{\sqrt{2m}} \cdot \frac{1}{2m} \cdot (\frac{1}{2m})^{k-1}=\frac{1}{(2m)^{k+1/2}}$. \qedhere
\end{proof}

\subsection{Sampling a Star}

Similarly to odd cycles, we observe that every $k$-star admits an exponential number of automorphisms. Therefore, we enforce a unique embedding of every instance of a $k$-star in our sampling algorithm \textsc{SampleStar}. Let $e_1, \ldots, e_k$ be the petals of an instance of a $k$-star. We sample $e_1, \ldots, e_k$ sequentially. If these edges form a star, we output it only if the leaves where sampled in ascending order with respect to ``$\prec$''.

\begin{algorithm}[H]
	\caption{Sampling a star with $k$ petals}
	\label{star-sample}
	\begin{algorithmic}[1]
		\Procedure{SampleStar}{$G,k$}
		\State Sequentially sample $k$ directed edges $ \{\diedge{u_1, v_1}, \ldots, \diedge{u_{k}, v_{k}}\}$ u.a.r. and i.i.d.
		\If{$u_1 = u_2 = \ldots = u_k$ and $v_1 \prec v_2 \prec \ldots \prec v_k$
		}
		\State \Return $(u_1, v_1, \ldots, v_k)$
		\EndIf
		\State \Return \fail
		\EndProcedure
	\end{algorithmic}
\end{algorithm}

\begin{lemma}
	\label{star-sampler-lemma}
	For any instance of a $k$-star $S_k$ in $G$, the probability that it will be returned by the algorithm \textsc{SampleStar}($G,k$) is $\frac{1}{(2m)^k}$.
\end{lemma}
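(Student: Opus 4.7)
The plan is to show that each instance of $S_k$ corresponds to exactly one outcome of the random sampling procedure and that this outcome has probability $1/(2m)^k$.

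First, I would fix an arbitrary copy of $S_k$ in $G$, given by its center $u$ and leaf set $\{v_1, \ldots, v_k\}$. Because ``$\prec$'' is a strict total order on $V$ (vertices are ordered by degree, with lexicographic tie-breaking), the leaves admit a \emph{unique} permutation $v_{\pi(1)} \prec v_{\pi(2)} \prec \cdots \prec v_{\pi(k)}$. I would then argue that \textsc{SampleStar} returns this particular copy if and only if the $i$-th sampled directed edge is exactly $(u, v_{\pi(i)})$ for every $i \in [k]$: the condition $u_1 = \cdots = u_k = u$ forces the centers to match, and the strict chain $v_1 \prec \cdots \prec v_k$ together with the uniqueness of $\pi$ forces the leaves to appear in the specified order. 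No other sequence of sampled directed edges can yield this instance as output.

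Next, I would compute the probability of this event. The algorithm samples each directed edge uniformly at random from the set of $2m$ directed edges, independently across the $k$ draws. Hence the probability that the $i$-th draw lands on the specific directed edge $(u, v_{\pi(i)})$ is $1/(2m)$, and by independence the joint probability of the target sequence is
\[
\prod_{i=1}^{k} \frac{1}{2m} \;=\; \frac{1}{(2m)^k}.
\]
Combined with the previous step, this gives the claimed probability $1/(2m)^k$.

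I do not anticipate a serious obstacle here; the only thing to verify carefully is that ``$\prec$'' really does induce a unique ordering of the leaves (so that there is no overcounting by automorphisms of $S_k$) and that centers are forced to coincide by the condition $u_1 = \cdots = u_k$. Both follow immediately from the definitions and from the strictness of the check in the algorithm.
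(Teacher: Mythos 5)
Your proof is correct and follows essentially the same approach as the paper's: both identify the unique ordered sequence of $k$ directed edges that the algorithm must draw (center-to-leaf, with leaves in $\prec$-increasing order) and multiply the $k$ independent probabilities of $1/(2m)$ each. You simply spell out the uniqueness of the ordering via a permutation $\pi$, whereas the paper labels the petals in sorted order from the start; the substance is identical.
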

\begin{proof}
	Consider any instance of $S_k$ with root $x$ and petals $y_1, \ldots, y_k$ such that $y_1 \prec \ldots y_k$.
	Note that it will be returned by \textsc{SampleStar} if and only if all the directed edges $\diedge{x,y_1},\ldots,\diedge{x,y_k}$ are sequentially sampled, which occurs with probability $1 /(2m)^k$.
\end{proof}

\subsection{Sampling $H$}
Let $H$ be a subgraph. It can be decomposed
into collections of $o$ odd cycles $\overline{C_i}$ and $s$ stars $\overline{S_j}$ as given in \cref{decomposition-lemma}. We say that $H$ has a (decomposition) \emph{type}  $\overline{T}=\{\overline{C_1},\ldots,\overline{C_o},\overline{S_1},\ldots,\overline{S_s}\}$. 
\begin{definition}
	Given a graph $G$, for each potential \emph{instance} $\mathcal{H}$ of $H$, we say that $\mathcal{H}$ can be decomposed into \emph{configurations} $\mathcal{T}=\{\mathcal{C}_1,\ldots,\mathcal{C}_o,\mathcal{S}_1,\ldots,\mathcal{S}_s\}$ with respect to type $\overline{T}=\{\overline{C_1},\ldots,\overline{C_o},\overline{S_1},\ldots,\overline{S_s}\}$, if
	\begin{enumerate}
		\item $\mathcal{C}_i \cong \overline{C_i}$ for any $1\leq i\leq o$, and $\mathcal{S}_j \cong \overline{S}_j$, for any $1\leq i\leq s$
		\item all the remaining edges of $H$ between vertices specified in $\mathcal{T}$ all are present in $G$.
	\end{enumerate}
	We let $f_{\overline{T}}(H)$ denote the number of all possible configurations $\mathcal{T}$ into which $H$ can be decomposed with respect to $\overline{T}$.
\end{definition}
\begin{algorithm}[H]
	\caption{Sampling a copy of subgraph $H$}
	\label{subgraph-sample}
	\begin{algorithmic}[1]  
		\Procedure{SampleSubgraph}{$G, H$}
		\State{Let $\overline{T}=\{\overline{C_1},\ldots,\overline{C_o},\overline{S_1},\ldots,\overline{S_s}\}$ denote a (decomposition) type  of $H$. }
		\ForAll{$i=1\ldots o$}
		\If{ \textsc{SampleOddCycle($G, \lvert E(\overline{C}_i) \rvert$)} returns a cycle $\mathcal{C}$}
		\State $\mathcal{C}_i \gets \mathcal{C}$\label{alg:cycle_H}
		\Else
		\State \Return \fail
		\EndIf
		\EndFor
		\ForAll{$j =1\ldots s$}
		\If{\textsc{SampleStar($G,\lvert V(\overline{S}_j) \rvert - 1$)} returns a star $\mathcal{S}$}
		\State $\mathcal{S}_j \gets \mathcal{S}$\label{alg:star_H} 
		\Else
		\State \Return \fail
		\EndIf 
		\EndFor
		
		\State Query all edges $(\bigcup_{i \in [o]} V(\mathcal{C}_i) \cup \bigcup_{j \in [s]} V(\mathcal{S}_j))^2$
		\If{$S := (\mathcal{C}_1, \ldots, \mathcal{C}_o, \mathcal{S}_1, \ldots, \mathcal{S}_s)$ forms a copy of $H$}
		\State flip a coin and with probability $\frac{1}{f_{\overline{T}}(H)}$: \Return $S$ \label{alg:occurrence_H}
		\EndIf
		\State \Return \fail
		\EndProcedure
	\end{algorithmic}
\end{algorithm}

\begin{lemma}
	\label{subgraph-sampler-lemma}
	For any instance of a subgraph $H$ in $G$, the probability that it will be returned by the algorithm \textsc{SampleSubgraph}($G, H$) is $\frac{1}{(2m)^{\rho(H)}}$.
\end{lemma}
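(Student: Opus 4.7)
The plan is to fix an arbitrary instance $\mathcal{H}$ of $H$ in $G$ and compute the probability that \textsc{SampleSubgraph} returns $\mathcal{H}$ by decomposing this event into disjoint sub-events indexed by configurations. First I would observe that the $o$ calls to \textsc{SampleOddCycle} and the $s$ calls to \textsc{SampleStar} use independent randomness. Hence, for any fixed ordered tuple $(\mathcal{C}_1,\ldots,\mathcal{C}_o,\mathcal{S}_1,\ldots,\mathcal{S}_s)$ of concrete subgraph instances in $G$ with $\mathcal{C}_i\cong \overline{C_i}$ and $\mathcal{S}_j\cong \overline{S_j}$, the probability that the subroutines return exactly this tuple factors as a product, and by \cref{odd-cycle-sampler-lemma}, \cref{star-sampler-lemma} and \cref{decomposition-lemma} this product equals
\[
\prod_{i=1}^o \frac{1}{(2m)^{\rho(\overline{C_i})}} \cdot \prod_{j=1}^s \frac{1}{(2m)^{\rho(\overline{S_j})}} = \frac{1}{(2m)^{\rho(H)}}.
\]

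Next I would argue that the algorithm reaches the coin flip at \cref{alg:occurrence_H} with the variables $\mathcal{C}_1,\ldots,\mathcal{C}_o,\mathcal{S}_1,\ldots,\mathcal{S}_s$ forming a copy of $\mathcal{H}$ if and only if that tuple is a configuration that decomposes $\mathcal{H}$ with respect to $\overline{T}$ (the vertex-pair queries guarantee that the remaining edges are checked, so only genuine instances of $H$ pass the test). By definition there are precisely $f_{\overline{T}}(H)$ such configurations, and distinct configurations correspond to disjoint outcomes of the subroutines because the output cycle and star objects differ. Summing over them gives a total probability of $f_{\overline{T}}(H)\cdot (2m)^{-\rho(H)}$ of reaching the acceptance step with a tuple that identifies $\mathcal{H}$. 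Multiplying by the acceptance probability $1/f_{\overline{T}}(H)$ yields the desired bound $(2m)^{-\rho(H)}$.

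The main obstacle — really the only delicate point — is making sure the counting is clean: I need to confirm that (a) different configurations of a single instance $\mathcal{H}$ produce disjoint output tuples of the subroutines, and (b) the constant $f_{\overline{T}}(H)$ is the same for every instance $\mathcal{H}$, so that the normalization by $1/f_{\overline{T}}(H)$ exactly cancels the overcount. Both are immediate from the definition (the number of decompositions depends only on the isomorphism type of $H$ and $\overline{T}$, not on the particular embedding into $G$), but the proof should state them explicitly to avoid confusion between ordered tuples, isomorphism classes, and the algorithm's labeling of $\overline{C_i},\overline{S_j}$.
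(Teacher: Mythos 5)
Your proof is correct and follows essentially the same route as the paper: factor the probability of obtaining a fixed configuration into the cycle and star sampler probabilities via \cref{odd-cycle-sampler-lemma,star-sampler-lemma}, sum over the $f_{\overline{T}}(H)$ configurations of a given instance, and cancel the overcount with the $1/f_{\overline{T}}(H)$ coin flip. Your explicit remarks that distinct configurations of $\mathcal{H}$ yield disjoint outcomes and that $f_{\overline{T}}(H)$ depends only on the isomorphism type of $H$ (not on the embedding) are worthwhile points of rigor that the paper's proof leaves implicit.
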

\begin{proof} 
	
	For any instance $\mathcal{H}$ of $H$ in $G$, and any configuration $\mathcal{T}=\{\mathcal{C}_1,\ldots,\mathcal{C}_O,\mathcal{S}_1,\ldots,\mathcal{S}_s\}$ of $\mathcal{H}$ with respect to $\overline{T}$, $\mathcal{H}$ will be returned by \textsc{SampleSubgraph}($G,H$) if and only if
	\begin{enumerate}
		\item $\mathcal{C}_i$ is returned in \cref{alg:cycle_H} for each $1\leq i\leq o$, and $\mathcal{S}_j$ is returned in \cref{alg:star_H} for any $1\leq j\leq s$; 
		\item the configuration is returned with probability $\frac{1}{f_{\overline{T}}(H)}$ in \cref{alg:occurrence_H}.
	\end{enumerate}
	By Lemma \ref{odd-cycle-sampler-lemma}, each $\mathcal{C}_i$ will be returned with probability $\frac{1}{(2m)^{|E(\overline{C_i})|/2}}=\frac{1}{(2m)^{\rho(\overline{C_i})}}$. By Lemma \ref{star-sampler-lemma} each $\mathcal{S}_j$ will be returned with probability $\frac{1}{(2m)^{|V(\overline{S_j})|-1}}=\frac{1}{(2m)^{\rho(\overline{S_j})}}$. Thus, $\mathcal{T}$ will be returned with probability 
	\begin{equation*}
		\prod_{i=1}^o \frac{1}{(2m)^{\rho(\overline{C_i})}}\cdot \prod_{j=1}^s \frac{1}{(2m)^{\rho(\overline{S_j})}} \cdot \frac{1}{f_{\overline{T}}(H)} = \frac{1}{(2m)^{\rho(H)}}\cdot \frac{1}{f_{\overline{T}}(H)}.
	\end{equation*}
	
	Finally, since there are $f_{\overline{T}}(H)$ configurations of $\mathcal{H}$ with respect to $\overline{T}$, the instance will be returned with probability $f_{\overline{T}}(H)\cdot \frac{1}{(2m)^{\rho(H)}} \cdot \frac{1}{f_{\overline{T}}(H)} = \frac{1}{(2m)^{\rho(H)}}$. \qedhere
\end{proof}

\subsection{The Final Sampler}
Let $X_H$ be an estimate of $\# H$. Such an estimate can be obtained by, e.g., the subgraph counting algorithm of Assadi, Kapralov and Khanna~\cite{assadi2018simple} in expected time $\tilde{O}(m^{\rho(H)}/\# H)$. We show that by sufficiently many calls to \textsc{SampleSubgraph}, we can obtain a uniformly random sample of an instance of $H$ with constant probability.

\begin{algorithm}[H]
	\caption{Sampling a copy of subgraph $H$ uniformly at random}
	\label{subgraph-sample-uniformly}
	\begin{algorithmic}[1]  
		\Procedure{SampleSubgraphUniformly}{$G, H, X_H$}
		\ForAll{$j=1,\ldots, q=10 \cdot {(2m)}^{\rho(H)}/X_H$}
		\State Invoke \textsc{SampleSubgraph}($G,H$)
		\If{a subgraph $H$ is returned}
		\Return $H$
		\EndIf
		\EndFor
		\State \Return \fail
		\EndProcedure
	\end{algorithmic}
\end{algorithm}

\begin{lemma}
	\label{correctness}
	If $\# H \leq X_H\leq 2 \# H$, then Algorithm \textsc{SampleSubgraphUniformly}$(G,{H}, X_H)$ returns a copy $H$ with probability at least $2/3$. The distribution induced by the algorithm is (exactly) uniform over the set of all instances of $H$ in $G$. 
\end{lemma}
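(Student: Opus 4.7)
The plan is to leverage Lemma~\ref{subgraph-sampler-lemma}, which says that \emph{any} specific instance $\mathcal{H}$ of $H$ in $G$ is returned by a single call to \textsc{SampleSubgraph}$(G,H)$ with the \emph{same} probability $p_0 := 1/(2m)^{\rho(H)}$. Two things follow immediately from this per-instance probability being identical: (i) the probability that a single call to \textsc{SampleSubgraph} returns \emph{some} instance of $H$ is $p := \#H \cdot p_0 = \#H / (2m)^{\rho(H)}$; and (ii) conditioned on a single call not failing, the returned instance is uniformly distributed over all copies of $H$ in $G$.

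Given this, I would argue uniformity and the success probability separately. For \textbf{uniformity}, \textsc{SampleSubgraphUniformly} returns the outcome of the first non-failing call among its $q$ independent invocations. Because the calls are i.i.d., conditioning on the event ``the first successful call occurs at trial $j$'' does not alter the distribution of that call's returned instance, which by (ii) above is uniform over instances of $H$. Summing over $j\in\{1,\dots,q\}$ preserves uniformity, so the conditional distribution of the returned copy given that the algorithm does not output \fail{} is exactly uniform over all copies of $H$ in $G$.

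For the \textbf{success probability}, the algorithm fails only if all $q$ independent calls fail, which happens with probability $(1-p)^q \leq e^{-pq}$. Plugging in $q = 10 (2m)^{\rho(H)}/X_H$ and using the hypothesis $X_H \leq 2\#H$ gives
\begin{equation*}
	pq \;=\; \frac{\#H}{(2m)^{\rho(H)}} \cdot \frac{10(2m)^{\rho(H)}}{X_H} \;=\; \frac{10\,\#H}{X_H} \;\geq\; 5.
\end{equation*}
Hence the failure probability is at most $e^{-5} < 1/3$, and \textsc{SampleSubgraphUniformly} returns a copy with probability at least $2/3$, as claimed. The other bound $X_H \geq \#H$ is not strictly needed for these two conclusions, but it ensures that $q$ is not wastefully large (it will be used when bounding the expected running time).

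There is no real obstacle here; the only subtle point worth stating carefully is why taking the first non-failing trial preserves uniformity. That is a routine i.i.d. argument (conditioning on the index of the first success decouples from the value returned at that trial, since all trials have the same conditional-on-success distribution), so the proof is essentially a two-line calculation of $pq$ plus this observation.
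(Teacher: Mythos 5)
Your proof is correct and follows essentially the same route as the paper: apply \cref{subgraph-sampler-lemma} to get the per-instance probability $1/(2m)^{\rho(H)}$, deduce a per-call success probability of $\#H/(2m)^{\rho(H)}$, bound the overall failure probability by $e^{-pq} \le e^{-5} < 1/3$, and conclude uniformity from all instances having identical return probability. Your spelling-out of why ``return the first non-failing trial'' preserves uniformity (conditioning on the index of the first success decouples from the returned value, by the i.i.d.\ structure) is actually slightly cleaner than the paper's phrasing, which somewhat informally identifies the overall success probability with the single-call success probability before taking the ratio; the underlying reasoning is the same.
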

\begin{proof}
	Since $\# H \leq X_H\leq 2 \# H$, the probability that no instance of $H$ is returned in $q=10 \cdot {(2m)}^{\rho(H)}/X_H$ invocations is at most 
	\[
	\left(1-\frac{\# H}{(2m)^{\rho(H)}}\right)^{q} \leq e^{-\frac{\# H}{(2m)^{\rho(H)}} \cdot q} < \frac13
	\]
	by \cref{subgraph-sampler-lemma}. Let $\mathcal{H}$ be an instance of $H$. By Lemma \ref{subgraph-sampler-lemma}, the probability that \textsc{SampleSubgraph}($H$) returns $\mathcal{H}$ is $\frac{1}{(2m)^{\rho(H)}}$. Thus, the probability that  \textsc{SampleSubgraphUniformly}$(G,{H})$ successfully output an instance of $H$ is 
	\begin{align*}
		\frac{\# H}{(2m)^{\rho(H)}}.
	\end{align*}
	
	Conditioned on the event that \textsc{SampleSubgraphUniformly}$(G,{H})$ succeeds, the probability that any specific instance $\mathcal{H}$ will be returned is
	\begin{align*}
		p_\mathcal{H} = \frac{\frac{1}{(2m)^{\rho(H)}}}{\frac{\# H}{(2m)^{\rho(H)}}} = \frac{1}{\# H}.
	\end{align*}
	
	That is, with probability at least $\frac23$, an instance $\mathcal{H}$ is sampled from the uniform distribution over all the instances of $H$ in $G$. \qedhere
	
\end{proof}   

Finally, we prove the expected query and time complexity of \textsc{SampleSubgraphUniformly}.

\begin{lemma}
	\label{complexity}
	The expected query and time complexity of \textsc{SampleSubgraph}-\\ \textsc{Uniformly}$(G,H,X_H)$ is $O(m^{\rho(H)} / X_H)$.
\end{lemma}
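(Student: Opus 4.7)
The plan is to bound each call to \textsc{SampleSubgraph}$(G,H)$ by $O(1)$ queries and $O(1)$ time (since $|H|$ is constant) and then multiply by the outer-loop cap $q=10(2m)^{\rho(H)}/X_H$.

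First I would check that each of the three subroutines has $O(1)$ cost. In \textsc{SampleWedge}, a single degree query for $d_u$ selects the branch; each branch then performs either one neighbor query, or one edge sampling query used to realize ``sample a vertex with probability proportional to its degree'' by drawing a uniform edge and taking one of its endpoints uniformly at random, followed by one degree query for $d_w$ and a constant-time accept/reject. Hence \textsc{SampleWedge} is $O(1)$. Then \textsc{SampleOddCycle}$(G,2k+1)$ uses $k$ edge sampling queries, one call to \textsc{SampleWedge}, and $O(k)$ comparisons (with at most $O(k)$ additional degree queries) to check the ordering condition, so it is $O(k)$ queries and time; analogously, \textsc{SampleStar}$(G,k)$ is $k$ edge sampling queries plus $O(k)$ comparisons.

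Summing these costs over the decomposition pieces $\overline{C_i}$ and $\overline{S_j}$ of $H$, adding the $O(|V(H)|^2)$ pair-query verification step and the biased coin flip (whose bias $1/f_{\overline{T}}(H)$ depends only on $H$ and can be precomputed), a single invocation of \textsc{SampleSubgraph} uses $O(1)$ queries and $O(1)$ time because $|H|=O(1)$. Since \textsc{SampleSubgraphUniformly} performs at most $q$ such invocations, with possible early termination upon success, both the expected number of queries and the expected running time are bounded by $q\cdot O(1)=O(m^{\rho(H)}/X_H)$.

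The argument amounts to careful bookkeeping of each primitive; the one place that deserves attention is that ``sample a vertex proportional to its degree'' inside \textsc{SampleWedge} is not itself a primitive of the augmented general graph model, so it must be realized by a single edge sampling query followed by a uniform choice of endpoint, which selects each vertex $v$ with probability $d_v/(2m)$ as required. Everything else is immediate from the fact that $H$ has constant size.
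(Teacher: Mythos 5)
Your proof is correct and follows essentially the same route as the paper: bound each subroutine (and hence each call to \textsc{SampleSubgraph}) by $O(1)$ queries and time using the fact that $|H|$ is constant, then multiply by the outer-loop cap $q = 10(2m)^{\rho(H)}/X_H$. Your extra remark that the degree-proportional vertex sampling inside \textsc{SampleWedge} must be implemented via an edge-sampling query followed by a uniform choice of endpoint is a useful implementation detail that the paper leaves implicit, but it does not change the substance of the argument.
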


\begin{proof}
	The query complexity of \textsc{SampleOddCycle}$(G,2k+1)$ is $O(1)$, as the query complexity of \textsc{SampleWedge}$(G, u_1, v_k)$ is $O(1)$. The query complexity of \textsc{SampleStar}$(G,k)$ is bounded by $k \in O(1)$. It follows that the query complexity of \textsc{SampleSubgraph}$(G, H)$ is at most $(o + s + \lvert H \rvert^2) \cdot O(1) \subseteq \lvert H \rvert \cdot O(1)$. The query complexity of \textsc{SampleSubgraphUniformly}$(G, H)$ is $O({(2m)}^{\rho(H)}/X_H \cdot \lvert H \rvert^2)=\tilde{O}({(2m)}^{\rho(H)}/X_H)$. To bound the running time, we observe that every loop in our algorithm issues at least one query, and we only perform isomorphism checks on subgraphs of constant size. Thus the running time is still $\tilde{O}({(2m)}^{\rho(H)}/X_H)$.
\end{proof}

The proof of \cref{thm:main} follows almost directly from \cref{correctness,complexity}.

\begin{proof}[Proof of \cref{thm:main}]
	For the case that $m \geq m^{\rho(H)} / \# H$, the claim follows from \cref{correctness,complexity}. If $m < m^{\rho(H)} / \# H$, we can query the whole graph, which requires $O(m)$ degree and neighbor queries, store the graph and answer the queries of the algorithm from this internal memory.
\end{proof}

Finally, we remark that our algorithm assumes the knowledge of the number of edges $m$. This assumption can be lifted with a slight cost on the query complexity and approximation of our algorithm: for any constant $\varepsilon>0$, one can first find a $(1+\varepsilon)$-approximation of $m$ by making $\tilde{O}(\sqrt{n})$ neighbor queries \cite{edge_counting_0}. We can then guarantee that the returned copy of $H$ is sampled with probability $(1\pm \Theta(\varepsilon))\cdot \frac{1}{\# H}$. That is, we simply replace $m$ by a $(1+\varepsilon)$-approximation of $m$ in \cref{wedge-sample,subgraph-sample-uniformly} and then the performance guarantee of the resulting algorithm directly follows from the previous analysis.

\section{Proof of Theorem \ref{thm:lowerbound}}\label{sec:lowerbound}
In this section, we give the proof of Theorem~\ref{thm:lowerbound}, which follows by adapting the proofs for the lower bounds on the query complexity for approximate counting subgraphs given by Eden and Rosenbaum \cite{total-lower-bound}.

\begin{theorem}[see Theorems 4.7 and B.1 in \cite{total-lower-bound}]
	\label{thm:lower-bound-families}
	For any choices of $n,m,r,c_r > 0$, there exist families of graphs with $n$ vertices and $m$ edges, $\mathcal{F}_0$ and $\mathcal{F}_1$, such that 
	\begin{itemize}
		\item all graphs in $\mathcal{F}_0$ are $K_r$-free,
		\item all graphs in $\mathcal{F}_1$ contain at least $c_r$ copies of $K_r$, 
		\item and any algorithm in the augmented general graph model that distinguishes a graph $G \in \mathcal{F}_0$ from $G \in \mathcal{F}_1$ with probability $\Omega(1)$ requires $\Omega(\min\{m, m^{r/2} / c_r(cr)^r\})$ queries for some constant $c>0$.
	\end{itemize}
\end{theorem}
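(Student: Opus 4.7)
The plan follows the Eden-Rosenbaum lower-bound recipe and instantiates it for $K_r$ in the augmented general graph model. The proof proceeds in three stages: construct the families, verify the clique counts, and establish indistinguishability against any algorithm making few queries.

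For the construction, partition a vertex set of size $n$ into roughly balanced blocks $V_1,\ldots,V_r$. Every graph in $\mathcal{F}_0 \cup \mathcal{F}_1$ shares a common \emph{scaffold}, namely an $r$-partite near-regular bipartite graph on these blocks, whose density is calibrated so that the total edge count is exactly $m$ and so that the scaffold alone contains no copy of $K_r$ (for instance, by taking each bipartite part to be a sparse random-like graph of the right density, where a standard first-moment calculation kills all $r$-cliques with high probability). Inside a small window between $V_1$ and $V_2$, reserve a random hidden matching $M$ of size $\Theta(c_r)$, together with a \emph{completion gadget} through $V_3,\ldots,V_r$: for each matched pair $(u,v)\in M$, a pre-planted copy of $K_{r-2}$ inside $V_3 \cup \cdots \cup V_r$ is joined to both $u$ and $v$, so that turning $M$ on creates at least one copy of $K_r$ per matched pair. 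A graph in $\mathcal{F}_1$ has $M$ and the gadget edges present; a graph in $\mathcal{F}_0$ omits them (one edge per gadget suffices) and reroutes the deleted edges elsewhere in the scaffold, preserving both the edge count $m$ and the degree sequence.

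For the clique counts, the scaffold is $K_r$-free by design, and any copy of $K_r$ must use an edge of $M$ together with its completion gadget, so $\mathcal{F}_0$ is $K_r$-free. Each matched pair in $\mathcal{F}_1$ contributes at least one copy of $K_r$, giving $\# K_r \ge c_r$. A short accounting using the Kruskal-Katona-type bound $\# K_r = O(m^{r/2})$, and using that each gadget uses $O(r^2)$ edges, shows any $c_r \le O(m^{r/2}/(cr)^r)$ is achievable in this way.

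The indistinguishability step is the crux and the main obstacle. By Yao's minimax, fix a deterministic $q$-query algorithm and draw the matching $M$ together with the gadget placement uniformly from the randomized support. Couple executions in $\mathcal{F}_0$ and $\mathcal{F}_1$: since the scaffold, cosmetic edges, and degree sequence are identical in the two families, the two executions diverge only when a query answer actually involves an edge of $M$ or of the gadget. A degree query never diverges; a neighbor query at a fixed index diverges with probability $O(|M|\, r/m)$, since only $O(c_r r)$ of the neighbor slots are affected; a pair query to a vertex pair in $V_i \times V_j$ diverges only if it names a matched pair or a gadget edge, which, averaged over the uniformly random placement inside the scaffold, occurs with probability at most $O((cr)^r\, c_r / m^{r/2})$ per query because the gadget's "hot" edges occupy that many out of roughly $m^{r/2}$ plausible $r$-tuples; a uniform edge-sample query lands in $M$ or the gadget with probability $O(c_r / m)$. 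The pair-query term dominates, and a union bound over $q$ queries bounds the total variation distance between the two views by $q \cdot O((cr)^r c_r / m^{r/2})$. Constant distinguishing advantage therefore forces $q = \Omega(m^{r/2}/(c_r (cr)^r))$, and the trivial bound $q = \Omega(m)$ (read off whenever the former exceeds $m$, since then almost all edges must be queried) yields the stated minimum. The delicate points I would handle carefully, as in Eden-Rosenbaum, are (i) arranging the cosmetic rerouting in $\mathcal{F}_0$ so that neighbor lists are indistinguishable at every queried index, not merely equal in size, and (ii) verifying that the uniform placement of $M$ and the gadget really yields the $(cr)^r$ factor in the pair-query probability, since this is precisely what drives the $r$-dependence of the final lower bound.
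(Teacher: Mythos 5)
This theorem is not proved in the paper at all: it is imported verbatim from Eden and Rosenbaum (Theorems~4.7 and~B.1 of \cite{total-lower-bound}), whose argument goes through communication complexity --- each query in the augmented general graph model is simulated by $O(\log n)$ bits of communication, and the distinguishing task is reduced from set disjointness. Your proposal instead attempts a direct construction plus a coupling/Yao argument, which is a legitimately different route in principle, but as written it has a gap that is fatal to the exponent you need.

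The problem is in the indistinguishability step, precisely at the point you flag as ``delicate point (ii)''. In your construction the only difference between $\mathcal{F}_0$ and $\mathcal{F}_1$ is a set of $\Theta(c_r)$ critical edges (one per gadget), hidden among candidate vertex pairs. There are at most $\binom{n}{2}$ candidate pairs, and $n$ can be as small as $\Theta(\sqrt{m})$, so a uniformly placed critical edge is hit by a single pair query (or located by a single uniform edge sample) with probability $\Omega(c_r/n^2)=\Omega(c_r/m)$ in the worst case, not $O\bigl(c_r(cr)^r/m^{r/2}\bigr)$. Your justification --- that the hot edges occupy $c_r(cr)^r$ out of ``roughly $m^{r/2}$ plausible $r$-tuples'' --- conflates pairs with $r$-tuples: a pair query names two vertices, and the relevant denominator is the number of candidate \emph{pairs}, not the number of candidate $r$-subsets. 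Consequently your argument yields at best $q=\Omega(n^2/c_r)=\Omega(m/c_r)$, which is strictly weaker than $\min\{m,\,m^{r/2}/(c_r(cr)^r)\}$ for $r\ge 3$ and large $m$. The $m^{r/2}$ in the true bound really does come from the $\approx \binom{\sqrt{m}}{r}$ candidate $r$-vertex subsets, and to exploit it the construction must have \emph{no} small set of critical edges: every individual edge must have identical marginals under both distributions, and only the joint presence of all $\binom{r}{2}$ edges of some $r$-tuple distinguishes the families. This is exactly what the disjointness reduction of \cite{total-lower-bound} arranges (each potential clique is activated by the conjunction of the two players' bits), and it is also why that proof handles degree, neighbor, pair and edge-sampling queries uniformly, whereas your per-query-type case analysis would additionally have to solve your ``delicate point (i)'' (exact matching of neighbor lists and the edge-sample distribution after rerouting), which is nontrivial on its own. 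To repair the proof you should either adopt the communication-complexity reduction or redesign the construction so that the planted cliques are encoded by correlations among $\binom{r}{2}$ individually uninformative edges rather than by $c_r$ hidden critical edges.
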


Now we prove our Theorem \ref{thm:lowerbound}.
\begin{proof}[Proof of Theorem \ref{thm:lowerbound}]
	Let $\mathcal{A}$ be an algorithm that for any graph $G=(V,E)$ on $n$ vertices and $m$ edges returns an arbitrary $r$-clique $K_r$, if one exists; and each $K_r$ is sampled according to $\mathcal{D}$, using $f(m, r, \# K_r) \in o(\min\{m, \frac{m^{r/2}}{\# K_r\cdot (cr)^r}\})$ neighbor, degree, pair and edge sampling queries.
	
	Let  $n, m, c_r > 0$ and let $\mathcal{F}_0, \mathcal{F}_1$ be the families from \cref{thm:lower-bound-families}. Consider the following algorithm $\mathcal{A}'$: run $\mathcal{A}$ on a graph from $\mathcal{F}_0 \cup \mathcal{F}_1$ and terminate $\mathcal{A}$ if it did not produce a $K_r$ after $f(m, r, c_r)$ queries. If it output a clique, $\mathcal{A}'$ claims that $G \in \mathcal{F}_1$, otherwise it claims that $G \in \mathcal{F}_0$. By the assumption, $\mathcal{A}$ returns a clique after at most $f(m, r, c_r)$ queries with probability $\Omega(1)$ if $G \in \mathcal{F}_1$ because then $G$ contains at least $c_r$ copies of $K_r$ and the probability mass of $\mathcal{D}$ on the set of all copies of $K_r$ is $\Omega(1)$. Otherwise, $G \in \mathcal{F}_0$, which implies that $G$ contains no triangle. Therefore, $\mathcal{A}$ cannot output a triangle from $G$.
	
	It follows that $\mathcal{A}'$ can distinguish $\mathcal{F}_0$ and $\mathcal{F}_1$, which is a contradiction to \cref{thm:lower-bound-families}.
\end{proof}
\section*{Acknowledgments}
We would like to thank the anonymous reviewers for their detailed comments. In particular, we would like to thank an anonymous reviewer for their suggestion to improve the presentation of the proof of \cref{thm:lowerbound} and their comment on applications, which we included as future work.

\bibliographystyle{alpha}
\bibliography{mybibliography}
\end{document}